\documentclass[11pt]{article}

\usepackage[margin=1in]{geometry}
\usepackage{amsmath,amssymb,amsthm,mathtools}
\usepackage{microtype}
\usepackage[hidelinks]{hyperref}
\hypersetup{
  pdftitle={No Welfare Ties Between Pure-Strategy Nash Equilibria},
  pdfauthor={Moeen Nehzati},
  pdfsubject={Research note}
}

\newtheorem{theorem}{Theorem}
\newtheorem{proposition}[theorem]{Proposition}
\newtheorem{corollary}[theorem]{Corollary}
\newtheorem{lemma}[theorem]{Lemma}
\newtheorem{definition}[theorem]{Definition}
\newtheorem{remark}[theorem]{Remark}

\DeclareMathOperator{\NE}{NE}
\DeclareMathOperator{\Conf}{Conf}

\newcommand{\R}{\mathbb{R}}
\newcommand{\U}{\mathcal{U}}
\newcommand{\Pcal}{\mathcal{P}}

\title{No Welfare Ties Between Pure-Strategy Nash Equilibria}
\author{Moeen Nehzati}
\date{October 3, 2026}

\begin{document}

\maketitle

\begin{abstract}
We show that distinct pure-strategy Nash equilibria typically have different utilitarian welfare values in sufficiently rich classes of \(C^2\) games with open strategy sets. Here, typical means both generic and finitely prevalent, while rich means that the class admits a finite-dimensional family of perturbations that can independently move payoff values and first derivatives in any direction at any two distinct strategy profiles—for example, polynomial perturbations of degree at most three. Consequently, a welfare-maximizing equilibrium, if one exists, is unique. The result extends to each fixed \(C^1\) payoff aggregator whose derivative with respect to the payoff vector is everywhere nonzero. In the full \(C^2\) utility space, we also treat scalar criteria depending on strategies, payoffs, and payoff derivatives, subject to a nonvanishing condition in payoff-level or cross-player derivative directions. Finally, adjoining arbitrarily small perturbations from a suitable finite-dimensional space makes welfare separation typical in the augmented family, with separation holding for almost every perturbation of each fixed base game.
\end{abstract}

\section{Intuition}\label{sec:intuition}

Consider a game with \(n\) players, each having an open strategy set
\(X\subseteq\R^d\), so the strategy-profile space is
\(X^n\subseteq\R^{nd}\). The first-order conditions define a map
\(F(u,\,\cdot\,):X^n\to\R^{nd}\), whose zero set contains the equilibria.
For pairs of distinct profiles, write
\[
\Conf_2(X^n):=\{(x,y)\in X^n\times X^n:x\neq y\}.
\]
For a scalar criterion \(A\) that is \(C^1\) in profiles, pairs of distinct
equilibria with equal \(A\)-values are zeros of
\[
(x,y)\longmapsto
\bigl(F(u,x),F(u,y),A(u,x)-A(u,y)\bigr),
\qquad
\Conf_2(X^n)\longrightarrow\R^{2nd+1}.
\]
If this map is transverse to \(\{0\}\), it has no zero. By the preimage
theorem, a nonempty transverse preimage is a submanifold whose dimension is
the domain dimension minus the codimension of the target submanifold in the
codomain. Here a nonempty zero set would therefore have dimension
\[
\dim\Conf_2(X^n)-\operatorname{codim}_{\R^{2nd+1}}\{0\}
=2nd-(2nd+1)=-1,
\]
which is impossible.

Hence, what we will show is that transversality is typical. Starting with utilitarian welfare, we establish this transversality for typical games
by applying Hunt, Sauer, and Yorke's parametric transversality
theorem~\cite[Theorem~1]{hunt1992prevalence}.

To the best of our knowledge, the no-welfare-ties result established here is
new. Related genericity and prevalence methods, however, have an established
history in economic theory.
For almost all finite normal-form games, equilibria are regular and persist
locally and smoothly under payoff perturbations~\cite{vandamme1981regular}; in
finite-state, finite-action discounted stochastic games, stationary Markov perfect
equilibria are generically purifiable by vanishing private payoff
shocks~\cite{doraszelski2010regular}.
For \(C^2\) games with at least two players and compact convex polyhedral
strategy sets, finiteness of the pure-equilibrium set is finitely
prevalent~\cite{anderson2001genericity}.
On an open dense set of \(C^2\) games with at least two players and simplex
strategy sets, equilibria at which every player chooses a nonvertex strategy are
Pareto-inefficient~\cite{dubey1986inefficiency}.
With observable preferences and a pure-equilibrium selection that is \(C^1\)
at zero types, the joint space of regular \(C^3\) two-player games on open
strategy sets with pure equilibria and \(C^3\) disposition families contains relatively prevalent
sets where some preference distortions improve actual payoffs and regular
payoff-monotonic selection from full-support initial type distributions does
not jointly converge to undistorted payoff maximization~\cite{heifetz2007maximize}.
The present note establishes a further typical property: distinct pure
equilibria have distinct welfare values, without requiring equilibrium
finiteness or compact strategy spaces.

\section{Setup}\label{sec:setup}

Fix integers \(n,d\geq 1\). Let \(X\subseteq\R^d\) be nonempty and open.
The profile space is \(X^n\subseteq\R^{nd}\), and the space of \(C^2\)
utility profiles is

\[
\U:=C^2(X^n,\R)^n,
\]

with the usual \(C^2\) topology.

Throughout, we use the standard Euclidean inner products to identify each
finite-dimensional Euclidean space with its dual.

For \(u=(u_1,\ldots,u_n)\in\U\), write
\(x=(x_1,\ldots,x_n)\in X^n\), where \(x_i\in X\) is player \(i\)'s
strategy and \(x_{-i}:=(x_j)_{j\neq i}\) is the profile of the other players'
strategies. The set of pure-strategy Nash equilibria is

\[
\NE(u)
:=
\left\{
x\in X^n:
u_i(x_i,x_{-i})\geq u_i(a_i,x_{-i})
\text{ for every }a_i\in X\text{ and every }i
\right\}.
\]

Define utilitarian welfare by

\[
W:\U\times X^n\to\R,
\qquad
W(u,x):=\sum_{i=1}^n u_i(x).
\]

Utilities are cardinal payoffs on specified scales; the welfare rule is held
fixed as payoff functions vary.

Define the first-order-condition map

\[
F:\U\times X^n\to(\R^d)^n\cong\R^{nd},
\qquad
F(u,x):=\bigl(D_i u_i(x)\bigr)_{i=1}^n,
\]

where \(D_i u_i(x)\) is the derivative of \(u_i\) with respect to player
\(i\)'s own strategy. Since \(X\) is open, every pure-strategy Nash
equilibrium is an interior optimum in each player's own strategy and hence

\begin{equation}\label{eq:NE-implies-FOC}
x\in\NE(u)\quad\Longrightarrow\quad F(u,x)=0.
\end{equation}

Define the equilibrium-candidate welfare-tie map

\begin{equation}\label{eq:tie-map}
T:\U\times\Conf_2(X^n)\to\R^{2nd+1},
\qquad
T(u,x,y)
:=
\bigl(F(u,x),F(u,y),W(u,x)-W(u,y)\bigr).
\end{equation}

For fixed \(u\), the zeros of \(T(u,\,\cdot\,,\,\cdot\,)\) are pairs of
distinct profiles that satisfy every player's first-order condition and
tie in welfare. Therefore \eqref{eq:NE-implies-FOC} implies

\begin{equation}\label{eq:tied-equilibria-subset}
\begin{split}
&\left\{(x,y)\in\Conf_2(X^n):
x,y\in\NE(u),\ W(u,x)=W(u,y)\right\}\\
&\qquad\subseteq
\left\{(x,y)\in\Conf_2(X^n):T(u,x,y)=0\right\}.
\end{split}
\end{equation}

Section~\ref{sec:no-welfare-ties} shows that, typically,
\(T(u,\,\cdot\,,\,\cdot\,)\) has no zero.
By \eqref{eq:tied-equilibria-subset}, no pair of distinct Nash equilibria
then ties in welfare.

\section{No welfare ties}\label{sec:no-welfare-ties}

\subsection{Hermite interpolation and the probe}

Let \(\Pcal_{\leq 3}(\R^{nd},\R^n)\) denote the vector space of polynomial
maps from \(\R^{nd}\) to \(\R^n\) of degree at most three. Its key property
is two-point Hermite interpolation: its elements can match arbitrarily
specified values and first derivatives at any two distinct points. The first
jet records exactly these data. For \(q\geq1\) and \(z\in\R^{nd}\), let

\[
J_z^1(\R^{nd},\R^q)
:=\R^q\times\operatorname{Lin}(\R^{nd},\R^q).
\]

If \(O\subseteq\R^{nd}\) is open, \(f\in C^1(O,\R^q)\), and \(z\in O\),
define

\[
j_z^1f:=\bigl(f(z),Df(z)\bigr)
\in J_z^1(\R^{nd},\R^q).
\]

For utility profiles, abbreviate
\(J_z^1:=J_z^1(\R^{nd},\R^n)\).

For distinct \(x,y\in\R^{nd}\), write
\(J_{x,y}^1:=J_x^1\times J_y^1\) and
\(j_{x,y}^1f:=\bigl(j_x^1f,j_y^1f\bigr)\) whenever \(f\) is defined on
an open set containing \(x\) and \(y\).

Let \(P\) be the restriction of
\(\Pcal_{\leq 3}(\R^{nd},\R^n)\) to \(X^n\).

\begin{lemma}[Two-point Hermite interpolation]\label{lem:hermite}
Fix distinct \(x,y\in X^n\). Then the two-point jet evaluation map

\[
j_{x,y}^1:
P\longrightarrow J_{x,y}^1
\]

is surjective and hence a submersion.
\end{lemma}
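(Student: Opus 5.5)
The plan is to reduce to the scalar case, then to one variable along the line through \(x\) and \(y\), and to write down explicit cubic polynomials. The map \(j^1_{x,y}\) is linear, and it acts componentwise on the \(n\) coordinates of \(\R^n\)-valued polynomials. It therefore suffices to show that for scalar polynomials \(p\in\Pcal_{\leq3}(\R^{nd},\R)\), the map
\[
p\longmapsto \bigl(p(x),Dp(x),p(y),Dp(y)\bigr)\in\R\times\R^{nd}\times\R\times\R^{nd}
\]
is surjective. Restricting to \(X^n\) does not matter, because \(x,y\in X^n\) and \(X^n\) is open, so the jets at \(x\) and \(y\) are determined by the polynomial near those points. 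Once surjectivity is known, the submersion claim is immediate: \(j^1_{x,y}\) is a linear map between finite-dimensional spaces, so its derivative at every point is the map itself, which is onto.

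For the scalar case, set \(v:=y-x\neq0\) and \(s(z):=\langle z-x,v\rangle/|v|^2\), an affine function with \(s(x)=0\), \(s(y)=1\), and \(Ds=v^\top/|v|^2\). I would use the standard one-variable cubic Hermite basis \(h_{00},h_{10},h_{01},h_{11}\) on \([0,1]\). For example, \(h_{00}(s)=2s^3-3s^2+1\) has value \(1\) at \(0\), value \(0\) at \(1\), and vanishing derivative at both endpoints; the other three play the analogous roles. To hit a target \((a,\alpha,b,\beta)\), with \(a,b\in\R\) and \(\alpha,\beta\in\R^{nd}\), I would take the ansatz
\[
p(z)=a\,g_0(s(z))+b\,g_1(s(z))+\langle\alpha,z-x\rangle\,\phi_0(s(z))+\langle\beta,z-y\rangle\,\phi_1(s(z)).
\]
Here \(g_0=h_{00}\), \(g_1=h_{01}\), and \(\phi_0,\phi_1\) are polynomials in \(s\) of degree at most two, with \(\phi_0(0)=1\), \(\phi_0(1)=0\), \(\phi_1(0)=0\), \(\phi_1(1)=1\). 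A convenient choice is \(\phi_0=(1-s)^2\) and \(\phi_1=s^2\). Each term then has total degree at most three.

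The verification is the main bookkeeping step. The values at \(x\) and \(y\) come out as \(a\) and \(b\), because the linear factors vanish at their base points, and \(\phi_0(1)=0\), \(\phi_1(0)=0\). For the derivatives I would use the product rule.

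At \(x\):
\begin{itemize}
\item The \(a\) and \(b\) terms contribute multiples of \(g_k'(0)=0\).
\item The \(\alpha\) term gives \(\alpha^\top\phi_0(0)=\alpha^\top\), plus a term with factor \(\langle\alpha,x-x\rangle=0\).
\item The \(\beta\) term gives \(\beta^\top\phi_1(0)=0\) plus \(\langle\beta,x-y\rangle\phi_1'(0)Ds\). This vanishes because \(\phi_1'(0)=0\) for \(\phi_1=s^2\).
\end{itemize}

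At \(y\), the symmetric computation uses \(\phi_0'(1)=0\) for \(\phi_0=(1-s)^2\). The delicate point is exactly these cross terms, and the choice \(\phi_0=(1-s)^2\), \(\phi_1=s^2\) with double roots is designed to kill them. Hence \(j^1_{x,y}p=(a,\alpha^\top,b,\beta^\top)\), as required.

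Finally, applying this construction to each component \(i=1,\dots,n\) gives a polynomial map in \(\Pcal_{\leq3}(\R^{nd},\R^n)\) with any prescribed two-point first jet. Its restriction to \(X^n\) lies in \(P\), which proves surjectivity.
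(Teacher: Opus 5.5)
Your proposal is correct and matches the paper's proof. Both reduce to scalar components, use an affine coordinate with value \(0\) at \(x\) and \(1\) at \(y\), and use the cubic Hermite polynomials \(2s^3-3s^2+1\) and \(-2s^3+3s^2\) for the values together with \((1-s)^2\langle\alpha,z-x\rangle\) and \(s^2\langle\beta,z-y\rangle\) for the derivatives; your explicit cross-term check via the double roots simply spells out the verification the paper leaves implicit.
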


\begin{proof}
Because the interpolation problem is componentwise, it suffices to treat the
scalar-valued case. Fix an arbitrary element
\(\bigl((a,g),(b,h)\bigr)\in
J_x^1(\R^{nd},\R)\times J_y^1(\R^{nd},\R)\) of the codomain, where
\(a,b\in\R\) and \(g,h\in(\R^{nd})^\ast\). Choose an affine linear function
\(t:\R^{nd}\to\R\) such that \(t(x)=0\) and \(t(y)=1\). Define

\[
H_0(s):=2s^3-3s^2+1,
\qquad
H_1(s):=-2s^3+3s^2,
\]

and set

\[
q(z)
:=
aH_0(t(z))+bH_1(t(z))
+(1-t(z))^2g(z-x)+t(z)^2h(z-y).
\]

Then \(q\) has degree at most three and satisfies

\[
q(x)=a,
\quad Dq(x)=g,
\quad q(y)=b,
\quad Dq(y)=h.
\]

Applying this construction independently to each component and restricting
the resulting polynomial map to \(X^n\) proves the claim.
\end{proof}

We use the following probe formulation: a set \(A\subseteq\U\) is
\emph{finitely prevalent with probe
\(P\)} if it contains a Borel set \(B\subseteq\U\) such that, for every
\(u\in\U\),

\[
u+p\in B
\qquad\text{for Lebesgue almost every }p\in P.
\]

A set is \emph{generic} if it contains a countable intersection of open dense
sets, and \emph{typical} if it is both generic and finitely prevalent.

\subsection{Application to welfare ties}

We use the following corollary of Hunt, Sauer, and Yorke's parametric
transversality theorem \cite[Theorem~1]{hunt1992prevalence}, specialized to
the target submanifold \(\{0\}\) and applied in local coordinates.

\begin{corollary}[Parametric avoidance of zero]\label{cor:parametric-avoidance}
Let \(A\) and \(B\) be open subsets of finite-dimensional, second-countable
smooth manifolds without boundary, and let
\(\Phi:A\times B\to\R^q\) be \(C^1\), with \(q>\dim A\).
If \(\Phi\) is transverse to \(\{0\}\), then for almost every \(b\in B\),

\[
\Phi(a,b)\neq0\qquad\text{for every }a\in A.
\]

Here almost every means outside a set of Lebesgue measure zero in local
coordinates on \(B\).
\end{corollary}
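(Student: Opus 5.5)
The plan is to derive the corollary from the parametric transversality theorem of Hunt, Sauer, and Yorke \cite[Theorem~1]{hunt1992prevalence}, combined with a dimension count. The argument runs in three steps.

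\emph{Step 1: reduce to Euclidean coordinates.} Since \(A\) and \(B\) are open subsets of second-countable manifolds, cover them by countably many coordinate charts \(A_k\cong\tilde A_k\subseteq\R^{\dim A}\) and \(B_l\cong\tilde B_l\subseteq\R^{\dim B}\). Transversality to \(\{0\}\) is local and invariant under diffeomorphisms of the domain. Hence each restriction \(\Phi_{kl}:=\Phi|_{A_k\times B_l}\), read in coordinates, is a \(C^1\) map \(\tilde A_k\times\tilde B_l\to\R^q\) that is transverse to \(\{0\}\). A countable union of null sets is null, and being null in local coordinates is chart-independent because \(C^1\) diffeomorphisms preserve null sets. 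It therefore suffices to prove the claim for each pair \((k,l)\) separately.

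\emph{Step 2: invoke the parametric transversality theorem.} Work on one Euclidean piece \(\Phi:\tilde A\times\tilde B\to\R^q\). The theorem \cite[Theorem~1]{hunt1992prevalence}, in the form "joint transversality implies transversality of almost every slice", gives the following. For Lebesgue almost every \(b\in\tilde B\), the map \(\Phi_b:=\Phi(\cdot,b):\tilde A\to\R^q\) is transverse to \(\{0\}\). If one prefers to avoid quoting the theorem, the classical route gives the same conclusion:
\begin{itemize}
\item By the preimage theorem, \(Z:=\Phi^{-1}(0)\) is a \(C^1\) submanifold of dimension \(\dim A+\dim B-q\).
\item Let \(\pi:Z\to\tilde B\) be the projection. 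A point \(b\) is a regular value of \(\pi\) exactly when \(\Phi_b\) is transverse to \(\{0\}\); this is the standard linear-algebra lemma.
\item Sard's theorem for \(\pi\) then shows the critical values are null.
\end{itemize}

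\emph{Step 3: the dimension count.} Fix \(b\) with \(\Phi_b\) transverse to \(\{0\}\), and suppose \(\Phi_b(a)=0\) for some \(a\). Transversality at \(a\) requires \(D\Phi_b(a):\R^{\dim A}\to\R^q\) to be surjective. This is impossible, since \(q>\dim A\). Hence \(\Phi_b\) has no zero, exactly as in the heuristic of Section~\ref{sec:intuition}. This can also be seen without Step 2's Sard argument: \(\pi:Z\to\tilde B\) is a \(C^1\) map from a manifold of dimension \(\dim A+\dim B-q<\dim B\). Its whole image is therefore null, by the easy case of Sard (mini-Sard: a \(C^1\) map into a higher-dimensional space has null image). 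This image is precisely the set of \(b\) with some zero. Since this route uses only \(C^1\) regularity, I would present it as the primary argument and use the theorem merely as a citation.

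\emph{Main obstacle.} The step requiring the most care is regularity. Full Sard for \(\pi\) would require \(C^r\) with \(r>\dim Z-\dim B+\dim\ldots\), which is not available from \(C^1\) alone. This is why the dimension-dropping mini-Sard version is the right tool: it needs only locally Lipschitz maps from lower dimension. The remaining point to verify is that \(Z\) is second countable and so a countable union of charts. This holds because \(Z\) is a subspace of \(\tilde A\times\tilde B\). Each chart image is then null, and the union over countably many charts of \(Z\) and over all pairs \((k,l)\) is null, completing the proof.
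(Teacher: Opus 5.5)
Your proof is correct, and the argument you choose to lead with differs from the paper's. The paper gives no separate proof. It obtains the corollary by invoking Hunt, Sauer, and Yorke's parametric transversality theorem \cite[Theorem~1]{hunt1992prevalence} in local coordinates, which gives transversality of \(\Phi(\cdot,b)\) to \(\{0\}\) for almost every \(b\). It then applies the dimension count of Section~\ref{sec:intuition}: a transverse zero of a slice would require a surjection \(\R^{\dim A}\to\R^q\). Your Steps~2--3 reproduce exactly this.

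Your preferred route skips slice transversality entirely. The bad parameter set is precisely \(\pi(Z)\), where \(Z=\Phi^{-1}(0)\) is a \(C^1\) submanifold of dimension \(\dim A+\dim B-q<\dim B\) (empty if this is negative). A \(C^1\) image of a second-countable manifold of lower dimension is null.

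What your route buys is self-containedness and transparent regularity bookkeeping. It uses only the \(C^1\) implicit function theorem, second countability, and the fact that locally Lipschitz maps cannot raise dimension. Your Step~1 also spells out the chart reduction that the paper compresses into ``applied in local coordinates.'' What the paper's citation buys is brevity and a place inside the general transversality framework. That framework also handles targets of codimension at most \(\dim A\), where the conclusion is slice transversality rather than emptiness. Your direct argument needs the expected dimension of a slice's zero set to be negative, which is all that is required here.

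One correction to your ``main obstacle'' paragraph. Sard's theorem for \(\pi:Z\to B\) needs only \(C^r\) with \(r>\max(0,\dim Z-\dim B)=0\), so \(C^1\) suffices. Moreover, when \(\dim Z<\dim B\) every point of \(Z\) is critical for \(\pi\), so full Sard and your mini-Sard assert the same thing. Your classical route in Step~2 is therefore valid at \(C^1\) regularity, and the obstacle you describe does not arise.
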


\begin{theorem}[No utilitarian welfare ties]\label{thm:no-welfare-ties}
The set

\[
\left\{
u\in\U:
\text{for all }x,y\in\NE(u),\ x\neq y
\Longrightarrow
\sum_{i=1}^n u_i(x)\neq\sum_{i=1}^n u_i(y)
\right\}
\]

is typical in \(\U\), with probe \(P\).
\end{theorem}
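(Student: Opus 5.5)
The plan is to build the Borel set \(B\subseteq\U\) as the set of \(u\) for which \(T(u,\cdot,\cdot)\) has no zero on \(\Conf_2(X^n)\). By \eqref{eq:tied-equilibria-subset}, \(B\) lies inside the set in Theorem~\ref{thm:no-welfare-ties}. We then show that \(B\) is finitely prevalent with probe \(P\) and that \(B\) is generic.

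\emph{Prevalence.} Fix \(u\in\U\). Apply Corollary~\ref{cor:parametric-avoidance} with \(A=\Conf_2(X^n)\), which is open in \(\R^{2nd}\), with \(B=P\cong\R^{\dim P}\), and with
\[
\Phi(x,y,p):=T(u+p,x,y),\qquad q=2nd+1>2nd=\dim A.
\]
Since \(u\in C^2\) and \(p\) is polynomial, \(F\) is \(C^1\) and \(\Phi\) is \(C^1\).

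For transversality it suffices that the partial derivative \(D_p\Phi(x,y,p)\) is surjective at every point. Because \(T\) is affine in \(u\), this derivative is
\[
\dot p\mapsto \bigl(F(\dot p,x),F(\dot p,y),W(\dot p,x)-W(\dot p,y)\bigr).
\]
This map factors as a linear map \(L\) composed with the jet map \(j^1_{x,y}\), where
\[
L\bigl((v,M),(w,N)\bigr)=\Bigl(\bigl(M_{ii}\bigr)_i,\bigl(N_{ii}\bigr)_i,\textstyle\sum_i v_i-\sum_i w_i\Bigr),
\]
and \(M_{ii}\) denotes the own-strategy block of the derivative of the \(i\)th component. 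The jet map \(j^1_{x,y}\) is surjective by Lemma~\ref{lem:hermite}. The map \(L\) is surjective because the value components \(v_i\) and the derivative blocks are independent coordinates of \(J^1_{x,y}\). Hence \(\Phi\pitchfork\{0\}\). The corollary then gives that for a.e.\ \(p\in P\), \(T(u+p,\cdot,\cdot)\) has no zero, that is, \(u+p\in B\).

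It remains to check that \(B\) is Borel. Exhaust \(\Conf_2(X^n)\) by countably many compact sets \(K_m\), for instance \(\{(x,y): x,y\in K'_m,\ |x-y|\ge 1/m\}\) with \(K'_m\) an exhaustion of \(X^n\) by compacts. Set
\[
B_m=\{u:\ T(u,\cdot)\neq0\text{ on }K_m\}.
\]
Each \(B_m\) is open in the \(C^2\) topology, since \(T\) depends continuously on the \(1\)-jet of \(u\), uniformly on compacts. Then \(B=\bigcap_m B_m\) is a \(G_\delta\), hence Borel.

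\emph{Genericity.} The sets \(B_m\) are open, so it remains to show that each \(B_m\) is dense; then \(B\) is a countable intersection of open dense sets. Given \(u\) and any neighborhood of \(u\), the prevalence step gives \(u+p\in B\subseteq B_m\) for a.e.\ \(p\in P\). Choosing \(p\) small in a basis of \(P\) keeps \(u+p\) in the neighborhood, since restricted polynomials are small in the \(C^2\) topology on compacts. This small-polynomial step is the one place where I expect care. If the paper uses the strong (Whitney) \(C^2\) topology, small polynomials need not be small at infinity. In that case I would replace \(p\) by \(\chi p\), with \(\chi\) a cutoff equal to \(1\) near \(K_m\). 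Since \(K_m\) is compact, transversality on \(K_m\) is only needed there, and the same parametric argument applies to the family \(u+\chi p\).

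The main obstacle is therefore the topological bookkeeping: making the \(B_m\) open in the chosen \(C^2\) topology and showing their density. The transversality computation itself is immediate from Lemma~\ref{lem:hermite}.
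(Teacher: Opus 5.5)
Your proposal is correct and follows the paper's proof essentially step for step: you take the no-zero set of \(T\) as the Borel witness, apply the Hunt, Sauer, and Yorke corollary to \((x,y,p)\mapsto T(u+p,x,y)\) with \(p\)-surjectivity from Lemma~\ref{lem:hermite}, get openness from a compact exhaustion of \(\Conf_2(X^n)\), and get density from arbitrarily small good perturbations. Your cutoff fix for the Whitney topology is a harmless extra; the paper's density step simply invokes continuity of \(p\mapsto u+p\), which holds in the compact-open \(C^2\) topology (a topological vector-space topology) that its admissibility discussion presumes.
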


\begin{proof}
Let
\[
\mathcal G
:=\{u\in\U:T(u,x,y)\neq0
\text{ for every }(x,y)\in\Conf_2(X^n)\}.
\]
Fix \(u\in\U\) and define the \(C^1\) map
\[
\Phi_u:\Conf_2(X^n)\times P\to\R^{2nd+1},
\qquad
\Phi_u(x,y,p):=T(u+p,x,y).
\]
For each \(x\neq y\), Lemma~\ref{lem:hermite} lets a perturbation
\(h\in P\) prescribe independently all utility values and first derivatives
at \(x\) and \(y\). In particular, it can prescribe both FOC blocks and
the welfare difference independently. Since \(T\) is linear in utilities,
\(D_p\Phi_u(x,y,p)\) is the surjective map \(h\mapsto T(h,x,y)\).
Thus \(\Phi_u\) is a submersion.

As \(\dim\Conf_2(X^n)=2nd<2nd+1\),
Corollary~\ref{cor:parametric-avoidance} gives \(u+p\in\mathcal G\) for
almost every \(p\in P\), for every \(u\in\U\). Such perturbations exist
arbitrarily close to zero; continuity of \(p\mapsto u+p\) therefore also
makes \(\mathcal G\) dense.

The open set \(\Conf_2(X^n)\) admits a countable compact exhaustion.
Nonvanishing of \(T\) on each compact set is open in \(u\), by uniform
control of payoff values and first derivatives on its coordinate projections.
Thus \(\mathcal G\) is a dense \(G_\delta\), hence generic and Borel.
The same almost-everywhere conclusion establishes its finite prevalence
with probe \(P\). By \eqref{eq:tied-equilibria-subset}, every game in
\(\mathcal G\) has no distinct welfare-tied Nash equilibria, proving the
claim.
\end{proof}

\begin{corollary}[Unique equilibrium maximizing utilitarian welfare]
\label{cor:unique-welfare-maximizer}
For every game in the typical set of
Theorem~\ref{thm:no-welfare-ties}, a pure-strategy Nash equilibrium
maximizing utilitarian welfare, if one exists, is unique.
\end{corollary}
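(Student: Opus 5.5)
The plan is to argue directly from the defining property of the typical set in Theorem~\ref{thm:no-welfare-ties}, with no further genericity input needed. Fix a game \(u\) in that set and suppose \(x^\ast,y^\ast\in\NE(u)\) both maximize utilitarian welfare over \(\NE(u)\). Write
\[
M:=\sup_{z\in\NE(u)}W(u,z).
\]
Saying that a welfare-maximizing equilibrium exists means this supremum is attained. Then \(W(u,x^\ast)=W(u,y^\ast)=M\), so the two equilibria tie in utilitarian welfare.

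The key step is to apply the defining implication of the typical set. For all distinct \(x,y\in\NE(u)\) it gives \(\sum_i u_i(x)\neq\sum_i u_i(y)\). Taking \(x=x^\ast\) and \(y=y^\ast\), a tie is possible only if \(x^\ast=y^\ast\), which gives uniqueness.

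I would add one remark. The corollary concerns maximization over the equilibrium set, not over all of \(X^n\), so no compactness or attainment is claimed, and the phrase ``if one exists'' handles the case where the supremum is not attained. More generally, on this typical set the map \(W(u,\cdot)\) is injective on \(\NE(u)\). Any equilibrium selected by an extremal or otherwise level-determined welfare criterion is therefore unique, and the maximizer is a special case.

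There is no substantive obstacle here, since all the analytic work is carried by Theorem~\ref{thm:no-welfare-ties}. The only point needing care is to quantify correctly: the maximizer is taken within \(\NE(u)\), and the typical set is fixed before the equilibria are chosen, so uniqueness holds simultaneously for every game in that set rather than for almost every perturbation separately.
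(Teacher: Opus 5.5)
Your proposal is correct and follows essentially the same approach as the paper. Two distinct welfare-maximizing equilibria would tie in utilitarian welfare, which the defining property of the typical set in Theorem~\ref{thm:no-welfare-ties} rules out.
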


\begin{proof}
Two distinct equilibria maximizing utilitarian welfare would have the same
utilitarian welfare, contrary to Theorem~\ref{thm:no-welfare-ties}.
\end{proof}

\begin{remark}[Minimal regularity for this argument]
The first-order-condition map \(F\) uses one derivative of each utility,
and parametric transversality requires
\(T(u,\,\cdot\,,\,\cdot\,)\) to be at least \(C^1\). Thus
\(u_i\in C^2\) is the weakest standard regularity under which this proof
applies.
\end{remark}

\begin{remark}[Scope]
The theorem as stated concerns pure equilibria on an open strategy set and
utilitarian welfare; it does not assert equilibrium existence or treat mixed
strategies. Section~\ref{sec:generalizations} records extensions of the
welfare criterion, probe, and utility space.
\end{remark}

\section{Generalizations}\label{sec:generalizations}

\subsection{Payoff aggregators}

\begin{proposition}[No ties for payoff aggregators]
\label{prop:payoff-aggregators}
Fix \(H\in C^1(X^n\times\R^n,\R)\) with
\(D_vH(z,v)\neq0\) for every \((z,v)\), and define
\[
A_H(u,z):=H\bigl(z,u(z)\bigr).
\]
The set of games whose distinct pure-strategy Nash equilibria have distinct
values of \(A_H\) is typical in \(\U\), with probe \(P\).
\end{proposition}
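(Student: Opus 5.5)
I will follow the proof of Theorem~\ref{thm:no-welfare-ties} verbatim, replacing \(W\) by \(A_H\). Define the aggregator tie map
\[
T_H(u,x,y):=\bigl(F(u,x),F(u,y),A_H(u,x)-A_H(u,y)\bigr),
\]
defined on \(\U\times\Conf_2(X^n)\) with values in \(\R^{2nd+1}\). Let \(\mathcal G_H\) be the set of \(u\) for which \(T_H(u,\cdot,\cdot)\) has no zero. As in \eqref{eq:tied-equilibria-subset}, \eqref{eq:NE-implies-FOC} shows that every \(u\in\mathcal G_H\) has no two distinct equilibria tying in \(A_H\). It therefore suffices to show that \(\mathcal G_H\) is a dense \(G_\delta\) and that, for every \(u\), \(u+p\in\mathcal G_H\) for almost every \(p\in P\).

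Fix \(u\in\U\) and set \(\Phi_u(x,y,p):=T_H(u+p,x,y)\) on \(\Conf_2(X^n)\times P\). This map is \(C^1\): \(F\) involves only first derivatives of \(C^2\) functions, and \(H\in C^1\) is composed with \(C^2\) maps. The key step, and the only new point compared with the theorem, is surjectivity of \(D_p\Phi_u(x,y,p)\). Since \(T_H\) is no longer linear in \(u\), I compute this derivative by the chain rule:
\[
h\mapsto\Bigl(F(h,x),\,F(h,y),\,D_vH\bigl(x,v_x\bigr)\,h(x)-D_vH\bigl(y,v_y\bigr)\,h(y)\Bigr),
\]
where \(v_x=(u+p)(x)\) and \(v_y=(u+p)(y)\).

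To hit a target \((\alpha,\beta,\gamma)\in\R^{nd}\times\R^{nd}\times\R\), I use Lemma~\ref{lem:hermite} to choose \(h\in P\) with the following jets:
\begin{itemize}
\item derivatives at \(x\) and \(y\) whose own-strategy blocks are \(\alpha\) and \(\beta\);
\item \(h(y)=0\);
\item \(h(x)=\gamma\,c/|c|^2\), where \(c:=D_vH(x,v_x)\in\R^n\), which is nonzero by hypothesis.
\end{itemize}
Because values and first derivatives can be prescribed independently, these choices do not interfere with one another, so the derivative is surjective. Note that nonvanishing is needed only at one point, \(x\), of each pair, which the hypothesis more than provides. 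Hence \(\Phi_u\) is a submersion, and in particular transverse to \(\{0\}\). Since \(\dim\Conf_2(X^n)=2nd<2nd+1\), Corollary~\ref{cor:parametric-avoidance} gives \(u+p\in\mathcal G_H\) for almost every \(p\in P\). Such \(p\) exist arbitrarily near \(0\), which yields density.

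For the \(G_\delta\) property, I take a countable compact exhaustion \(K_m\) of \(\Conf_2(X^n)\) and show that \(\{u:T_H(u,\cdot,\cdot)\neq0\text{ on }K_m\}\) is open. The map \((u,x,y)\mapsto T_H(u,x,y)\) is jointly continuous for the \(C^2\) (even \(C^1\)) topology on compacta, by uniform continuity of \(H\) on compact subsets of \(X^n\times\R^n\). So \(|T_H(u,\cdot,\cdot)|\) has a positive minimum on \(K_m\), and this persists under small perturbation of \(u\). Intersecting over \(m\) exhibits \(\mathcal G_H\) as a dense \(G_\delta\), hence Borel and generic; the almost-every statement gives finite prevalence with probe \(P\).

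I expect no serious obstacle. The only care points are the chain-rule computation replacing linearity in the submersion step, and the continuity of the composition with \(H\) in the openness step. The latter is where \(C^1\) regularity of \(H\) and compactness are used.
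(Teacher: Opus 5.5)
Your proof is correct and follows the paper's argument: the same tie map \(T_H\), the same Hermite choice with \(h(y)=0\) so that \(D_vH\bigl(x,(u+p)(x)\bigr)h(x)\) reaches any scalar while both FOC blocks stay freely prescribable, then Corollary~\ref{cor:parametric-avoidance}, and finally compact exhaustion to get the dense \(G_\delta\). Your explicit chain-rule formula for \(D_p\Phi_u\) and the choice \(h(x)=\gamma c/|c|^2\) just spell out what the paper states in one line.
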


The gradient condition means that, holding strategies fixed, changing at
least one player's payoff changes the index to first order. The typical set
may depend on the fixed aggregator \(H\).

\begin{proof}
Replace \(T\) by
\[
T_H(u,x,y):=\bigl(F(u,x),F(u,y),A_H(u,x)-A_H(u,y)\bigr).
\]
For each fixed \(u\), the map \((x,y,p)\mapsto T_H(u+p,x,y)\) is \(C^1\).
Lemma~\ref{lem:hermite} lets a perturbation \(h\in P\) prescribe both FOC
blocks and the values \(h(x),h(y)\) independently. Setting \(h(y)=0\),
the payoff-value contribution to the differential of the last coordinate is
\[
D_vH\bigl(x,(u+p)(x)\bigr)h(x).
\]
This can take any scalar value without altering either FOC block. Hence the
parameter derivative is surjective. Corollary~\ref{cor:parametric-avoidance}
gives absence of zeros of \(T_H(u+p,\,\cdot\,,\,\cdot\,)\) for almost every
\(p\in P\), for every base game \(u\).

As in the proof of Theorem~\ref{thm:no-welfare-ties}, nonvanishing on each
compact subset of \(\Conf_2(X^n)\) is open in the \(C^2\) topology.
Compact exhaustion and arbitrarily small good perturbations therefore make
the set of games with no zeros of \(T_H\) a dense \(G_\delta\). It is the
Borel witness for finite prevalence, and its games have no distinct
\(A_H\)-tied Nash equilibria.
\end{proof}

\begin{corollary}[Weighted welfare]\label{cor:weighted-welfare}
For each fixed \(\lambda\in[0,\infty)^n\setminus\{0\}\), distinct
pure-strategy Nash equilibria typically have distinct values of
\(W_\lambda(u,x):=\sum_i\lambda_i u_i(x)\).
\end{corollary}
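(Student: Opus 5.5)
The plan is to obtain Corollary~\ref{cor:weighted-welfare} as a direct specialization of Proposition~\ref{prop:payoff-aggregators}. Fix \(\lambda\in[0,\infty)^n\setminus\{0\}\) and define
\[
H_\lambda:X^n\times\R^n\to\R,\qquad H_\lambda(z,v):=\sum_{i=1}^n\lambda_i v_i=\langle\lambda,v\rangle .
\]
Then \(A_{H_\lambda}(u,z)=H_\lambda\bigl(z,u(z)\bigr)=\sum_i\lambda_i u_i(z)=W_\lambda(u,z)\). So the set of games whose distinct pure-strategy Nash equilibria have distinct \(W_\lambda\)-values is exactly the set treated in Proposition~\ref{prop:payoff-aggregators} for this choice of \(H\).

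It remains to check the two hypotheses of the proposition. First, \(H_\lambda\) is linear in \(v\) and constant in \(z\), hence smooth and in particular \(C^1(X^n\times\R^n,\R)\). Second, under the Euclidean identification of \(\operatorname{Lin}(\R^n,\R)\) with \(\R^n\), the payoff derivative is
\[
D_vH_\lambda(z,v)=\lambda
\]
at every \((z,v)\). This is nonzero because \(\lambda\neq0\) by assumption.

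Proposition~\ref{prop:payoff-aggregators} then gives that the set is typical in \(\U\), that is, generic and finitely prevalent with probe \(P\). This is precisely what ``typically'' means in the corollary.

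There is no real obstacle; the only point to make explicit is the role of each condition on \(\lambda\). Nonnegativity of the weights is not used: any nonzero \(\lambda\in\R^n\) works by the same argument. The exclusion \(\lambda=0\) is essential, since then \(W_\lambda\equiv0\) and every pair of equilibria ties. As the remark after Proposition~\ref{prop:payoff-aggregators} notes, the resulting typical set may depend on \(\lambda\). The corollary is stated for each fixed \(\lambda\), so no uniformity over \(\lambda\) is claimed or needed.
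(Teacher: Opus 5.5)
Your proof is correct and is exactly the paper's argument: set \(H(z,v)=\sum_i\lambda_i v_i\), note that \(D_vH=\lambda\neq0\), and apply Proposition~\ref{prop:payoff-aggregators}; you are also right that nonnegativity of \(\lambda\) is never used. One small caveat on your aside: ``every pair ties'' does not by itself show that excluding \(\lambda=0\) is essential, because with \(\lambda=0\) the claim becomes ``typically at most one equilibrium,'' which fails for \(n\geq2\) (two nondegenerate strict equilibria survive all small \(p\in P\), a positive-measure set) but for \(n=1\) coincides with the \(\lambda\neq0\) conclusion, since one-player equilibria are global maximizers and always tie.
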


\begin{proof}
Take \(H(z,v)=\sum_i\lambda_i v_i\), so \(D_vH=\lambda\neq0\).
\end{proof}

\begin{corollary}[Unique equilibrium maximizing a payoff aggregator]
\label{cor:unique-aggregator-maximizer}
For each fixed \(H\) as in Proposition~\ref{prop:payoff-aggregators} and
every game in its typical set, a pure-strategy Nash equilibrium maximizing
\(A_H\), if one exists, is unique.
\end{corollary}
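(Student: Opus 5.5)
The statement follows directly from Proposition~\ref{prop:payoff-aggregators}, in the same way that Corollary~\ref{cor:unique-welfare-maximizer} follows from Theorem~\ref{thm:no-welfare-ties}. Fix \(H\in C^1(X^n\times\R^n,\R)\) with \(D_vH(z,v)\neq0\) everywhere. Let \(\mathcal G_H\subseteq\U\) denote the typical set produced by that proposition, namely the games for which \(T_H(u,\,\cdot\,,\,\cdot\,)\) has no zero on \(\Conf_2(X^n)\). The plan is to show that on \(\mathcal G_H\) any two \(A_H\)-maximizing equilibria must coincide.

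Take \(u\in\mathcal G_H\) and suppose \(x,y\in\NE(u)\) both maximize \(A_H(u,\cdot)\) over \(\NE(u)\). Each is a maximizer, so \(A_H(u,x)\geq A_H(u,y)\) and \(A_H(u,y)\geq A_H(u,x)\), which gives \(A_H(u,x)=A_H(u,y)\). If \(x\neq y\), then \((x,y)\in\Conf_2(X^n)\). By \eqref{eq:NE-implies-FOC} we have \(F(u,x)=F(u,y)=0\), and hence \(T_H(u,x,y)=0\). This contradicts \(u\in\mathcal G_H\), equivalently the conclusion of Proposition~\ref{prop:payoff-aggregators} that distinct equilibria have distinct \(A_H\)-values. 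Therefore \(x=y\).

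Two points need care. First, ``maximizing'' is to be read as maximizing among pure-strategy equilibria. Second, the typical set depends on \(H\), so the quantifier order matters: \(H\) is fixed first and the typical set is taken afterwards. There is no real obstacle beyond keeping that order. No existence claim is made, so the case where no maximizer exists holds vacuously.
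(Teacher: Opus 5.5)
Your proof is correct and matches the paper's: two distinct \(A_H\)-maximizing equilibria would tie in \(A_H\), which contradicts Proposition~\ref{prop:payoff-aggregators}. The detour through \(T_H\) is harmless, but the no-zero set of \(T_H\) is only a subset of the proposition's typical set, not equivalent to it; your direct appeal to the proposition's conclusion is what covers every game in that set.
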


\begin{proof}
Two distinct maximizing equilibria would tie in \(A_H\), contrary to
Proposition~\ref{prop:payoff-aggregators}.
\end{proof}

Appendix~\ref{app:first-jet-criteria} extends the separation result to
scalar criteria that also depend on payoff derivatives.

\subsection{Alternative probes}

A finite-dimensional linear subspace \(P_0\subseteq\mathcal U\) is a
\emph{probe for \(\mathcal U\)} if, for every \(u\in\mathcal U\), the map
\((p,x,y)\mapsto j_{x,y}^1(u+p)\) is \(C^1\) on
\(P_0\times\Conf_2(X^n)\) and its derivative in \(p\) is surjective
everywhere. Addition uses the chosen linear structure on
\(\mathcal U\). Under the ordinary pointwise operations, joint \(C^1\)
regularity follows from the \(C^2\) regularity of the utilities, and

\[
j_{x,y}^1(u+p)=j_{x,y}^1u+j_{x,y}^1p,
\]

so the probe condition reduces to surjectivity of \(j_{x,y}^1|_{P_0}\).
By Lemma~\ref{lem:hermite}, the cubic space \(P\) is a probe for every
pointwise-linear utility space containing it. Any finite-dimensional
subspace containing a probe is also a probe in this pointwise setting.

\subsection{Admissible utility spaces}\label{sec:admissible-spaces}

\begin{definition}[Admissible utility space]
A class \(\mathcal U\subseteq C^2(X^n,\R)^n\) is \emph{admissible} if
it is endowed with a topological vector-space structure for which the
evaluation map

\[
\operatorname{ev}:\mathcal U\times X^n\longrightarrow\R^n,
\qquad
\operatorname{ev}(u,z):=u(z),
\]

is jointly continuous. The linear operations need not be pointwise.
\end{definition}

The space \(C^2(X^n,\R)^n\), with its usual \(C^2\) topology, is
admissible, as is every pointwise-linear subspace with the induced topology.
This includes \(\Pcal_{\leq k}(\R^{nd},\R^n)|_{X^n}\) for every \(k\geq3\).

The conclusions of Theorem~\ref{thm:no-welfare-ties} and
Proposition~\ref{prop:payoff-aggregators} hold typically in any admissible
utility space admitting a probe, for each fixed \(H\). The probe condition
and the proposition's rank argument give a \(C^1\) submersion along each
probe translate, hence almost-everywhere absence of ties.
Joint evaluation continuity and continuity
of \(H\) make the relation of \(A_H\)-tied Nash pairs closed, since every
Nash inequality and aggregator equality is a closed condition. Its projection
over each compact subset of \(\Conf_2(X^n)\) is closed, so compact exhaustion
makes the Nash no-ties set \(G_\delta\). Small good probe perturbations make
this set dense; it is also the Borel witness for finite prevalence.

\section{Arbitrarily small augmentation}\label{sec:augmentation}

Return to utilitarian welfare on the open strategy space \(X\).
Even when welfare separation is not typical in an admissible utility class,
adjoining arbitrarily small payoff perturbations from a suitable probe makes
separation generic in the augmented family and full measure along every
base-game fiber.

Let \(\mathcal V\) be an admissible utility space as in
Section~\ref{sec:admissible-spaces}, and let \(P_0\) be any probe for
\(C^2(X^n,\R)^n\) with its pointwise linear operations. Thus \(P_0\) is
a finite-dimensional pointwise-linear space of payoff perturbations with
\(j_{x,y}^1|_{P_0}\) surjective for every \(x\neq y\).
Form the external direct sum, with the product topology,
\[
\widetilde{\mathcal V}:=\mathcal V\oplus P_0=\mathcal V\times P_0,
\qquad
\widetilde{\operatorname{ev}}((u,p),z):=u(z)+p(z).
\]
The embedding \(u\mapsto(u,0)\) preserves the original linear structure.
Pairs are retained as parameters, even if they represent the same payoff
profile; genericity and prevalence refer to this parameter space.

\begin{corollary}[Small augmentation]\label{cor:small-augmentation}
Welfare separation is typical in \(\widetilde{\mathcal V}\), with probe
\(\{0\}\times P_0\). Fix any norm on \(P_0\). For each \(u\in\mathcal V\)
and \(\varepsilon>0\), a generic, full-Lebesgue-measure subset of
\(\{p\in P_0:\|p\|<\varepsilon\}\) yields no welfare ties between distinct
Nash equilibria. For every compact \(S\subset X^n\),
\[
\max_{|\alpha|\le2}\sup_{z\in S}\|D^\alpha p(z)\|
\le C_{S,P_0}\|p\|,
\]
with \(C_{S,P_0}>0\) independent of \(u\) and \(p\).
The bound also holds with \(S=X^n\) if all elements of \(P_0\) have
bounded derivatives of orders zero through two on \(X^n\).
\end{corollary}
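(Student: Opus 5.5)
The plan is to reduce everything to the machinery already used for Theorem~\ref{thm:no-welfare-ties}, applied on the augmented parameter space \(\widetilde{\mathcal V}\).

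\textbf{Step 1: \(\widetilde{\mathcal V}\) is admissible and \(\{0\}\times P_0\) is a probe.}
First I would check that \(\widetilde{\mathcal V}=\mathcal V\times P_0\), with the product topology and componentwise linear operations, is an admissible utility space. Each element \((u,p)\) represents the \(C^2\) profile \(u+p\). The map
\[
\widetilde{\operatorname{ev}}((u,p),z)=\operatorname{ev}(u,z)+p(z)
\]
is jointly continuous: the first term is continuous by admissibility of \(\mathcal V\), and the second because \(P_0\) is finite-dimensional and \((p,z)\mapsto p(z)\) is bilinear-continuous in coordinates.

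Next I would show that \(\{0\}\times P_0\) is a probe. Translating \((u,p)\) by \((0,q)\) gives the payoff profile \(u+p+q\). Hence
\[
j^1_{x,y}(u+p+q)=j^1_{x,y}(u+p)+j^1_{x,y}q.
\]
This is \(C^1\) jointly in \((q,x,y)\) because \(u+p\in C^2\) and \(P_0\) is a finite-dimensional space of \(C^2\) maps. Its \(q\)-derivative is \(j^1_{x,y}|_{P_0}\), which is surjective by hypothesis.

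The remark after the definition of admissible spaces then applies directly. It gives a \(C^1\) submersion along each probe translate, so Corollary~\ref{cor:parametric-avoidance} yields no zeros of \(T\) for a.e.\ probe parameter. Closedness of the Nash-tie relation under joint evaluation continuity, together with compact exhaustion of \(\Conf_2(X^n)\), gives a \(G_\delta\). Small good perturbations give density. So welfare separation is typical in \(\widetilde{\mathcal V}\).

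\textbf{Step 2: the fiber statement.}
Fix \(u\in\mathcal V\) and the open ball \(B_\varepsilon=\{\|p\|<\varepsilon\}\subset P_0\). Apply Corollary~\ref{cor:parametric-avoidance} to
\[
\Phi(x,y,p)=T(u+p,x,y)\quad\text{on }\Conf_2(X^n)\times B_\varepsilon .
\]
By Step 1 this is a submersion in \(p\), so the set of good \(p\) has full Lebesgue measure in \(B_\varepsilon\). Its complement is null, so the good set is also dense.

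For genericity inside \(B_\varepsilon\), I would intersect the \(G_\delta\) set from Step 1 with the slice \(\{u\}\times B_\varepsilon\). The slice map \(p\mapsto(u,p)\) is continuous, so preimages of open sets are open. Each open set in the countable intersection contains the dense good set, so each preimage is open and dense. The good set in the ball is therefore a dense \(G_\delta\).

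\textbf{Step 3: the derivative bound.}
Since \(P_0\) is finite-dimensional, all norms on it are equivalent. The quantity
\[
\|p\|_{S}:=\max_{|\alpha|\le2}\sup_{z\in S}\|D^\alpha p(z)\|
\]
is finite for compact \(S\), because \(p\) is \(C^2\) and continuous derivatives are bounded on compacta. It is a seminorm on \(P_0\).

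Every seminorm on a finite-dimensional space is bounded by a constant multiple of any norm. Concretely, take a basis \(e_1,\dots,e_m\) and write \(p=\sum c_ke_k\). Then
\[
\|p\|_S\le\Bigl(\max_k\|e_k\|_S\Bigr)\sum_k|c_k|\le C_{S,P_0}\|p\|,
\]
and this constant depends only on \(S\) and \(P_0\), not on \(u\) or \(p\). When every element of \(P_0\) has bounded derivatives of orders zero through two on \(X^n\), the same argument works with \(S=X^n\), since each \(\|e_k\|_{X^n}\) is finite.

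\textbf{Main obstacle.}
The delicate step is the genericity claim in Step 2. Genericity in \(\widetilde{\mathcal V}\) does not by itself restrict to genericity on a single slice. That is why I argue through the \(G_\delta\) structure: each open set in the countable intersection contains the full-measure, hence dense, good set on the slice. This requires checking that the open sets used in the construction are exactly the "no tie on the \(k\)-th compact" sets, which contain every good \(p\).

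If that identification is awkward, the fallback is to rerun the compact-exhaustion openness argument directly on the family \(p\mapsto u+p\). On each compact subset of \(\Conf_2(X^n)\), nonvanishing of \(T\) is open in \(p\) by the continuity from Step 1.
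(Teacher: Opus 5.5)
Your proposal is correct and follows the paper's own proof. In both, translation by \((0,q)\) with the rank argument of Theorem~\ref{thm:no-welfare-ties} gives full measure along each probe translate; closedness of the tied-Nash relation with compact exhaustion gives a \(G_\delta\) set; small good perturbations give density; restricting that \(G_\delta\) set to a fixed-base ball, where the good set has full measure, gives fiber genericity; and expanding in a basis of \(P_0\) gives the derivative bound.
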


\begin{proof}
Translation by \((0,q)\) adds the payoff \(q\), so the probe condition and
the proof of Theorem~\ref{thm:no-welfare-ties} give full measure along every
probe translate. Joint evaluation continuity makes the relation of
welfare-tied Nash pairs closed. Compact projection and exhaustion of
\(\Conf_2(X^n)\) therefore make the no-ties set \(G_\delta\); small good
probe perturbations make it dense. This also supplies the Borel witness for
finite prevalence. On each fixed-base probe ball, the good set is again
\(G_\delta\) and has full measure, hence is generic. The bound follows by
expanding \(p\) in a basis of \(P_0\) and bounding the basis functions and
their derivatives on \(S\), or on all of \(X^n\) under the stated boundedness
condition.
\end{proof}

For each fixed \(H\) as in Proposition~\ref{prop:payoff-aggregators},
Corollary~\ref{cor:small-augmentation} also holds with welfare replaced by
\(A_H\), evaluated at augmented payoffs as \(H(z,u(z)+p(z))\).
The proposition supplies the same rank argument along \(\{0\}\times P_0\),
and continuity of \(H\) preserves the closed relation of tied Nash pairs.

For example, the cubic polynomial space \(P\) is such a probe by
Lemma~\ref{lem:hermite}. When \(X\) is precompact, its polynomial derivatives
are bounded on \(\overline{X^n}\), giving uniform \(C^2\) smallness on the
whole domain.

\begin{remark}[A globally bounded probe]\label{rem:bounded-probe}
Set \(w(z):=e^{-\|z\|^2}\) and \(P_{\mathrm b}:=wP\).
At each \(z\), multiplication by \(w\) induces the invertible first-jet map
\[
(v,L)\longmapsto\bigl(w(z)v,\,w(z)L+v\,Dw(z)\bigr).
\]
Thus Lemma~\ref{lem:hermite} gives two-point first-jet surjectivity for
\(P_{\mathrm b}\). Its elements are restrictions of Gaussian-weighted
polynomials, whose derivatives of orders zero through two are bounded on
\(\R^{nd}\). Corollary~\ref{cor:small-augmentation} therefore gives uniform
\(C^2\)-smallness on the whole domain even when \(X\) is unbounded.
\end{remark}

Thus \(\mathcal V\times\{p\in P_0:\|p\|<\varepsilon\}\) is an
arbitrarily small augmentation with generic welfare separation, retaining
the original class at \(p=0\). Smallness holds through second derivatives
on compact subsets, and uniformly under the boundedness condition above.
Preserving additional economic restrictions requires perturbation directions
that respect them.

\appendix
\section{Criteria depending on payoff derivatives}
\label{app:first-jet-criteria}

In the full utility space \(\U=C^2(X^n,\R)^n\), with its usual \(C^2\)
topology, the same argument separates scalar criteria involving marginal effects.
For \(L\in\operatorname{Lin}(\R^{nd},\R^n)\), write \(L_{ij}\) for the
derivative block of output \(i\) with respect to input block \(j\).
When \(L=Du(z)\), these blocks are \(L_{ij}=D_j u_i(z)\).

\begin{theorem}[No ties for nonflat scalar first-jet criteria]
\label{thm:scalar-nonflatness}
Fix a \(C^1\) function
\[
\gamma:X^n\times\R^n\times\operatorname{Lin}(\R^{nd},\R^n)\to\R,
\qquad
A(u,z):=\gamma\bigl(z,u(z),Du(z)\bigr).
\]
Suppose that
\[
D_{(v,(L_{ij})_{i\neq j})}\gamma(z,v,L)\neq0
\qquad\text{for every }(z,v,L).
\]
Then the set of games whose distinct pure-strategy Nash equilibria have
distinct values of \(A\) is typical in \(\U\), with probe \(P\).
\end{theorem}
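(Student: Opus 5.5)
The plan is to follow the proof of Theorem~\ref{thm:no-welfare-ties} and Proposition~\ref{prop:payoff-aggregators}, with the tie map replaced by
\[
T_A(u,x,y):=\bigl(F(u,x),F(u,y),A(u,x)-A(u,y)\bigr).
\]
The only new ingredient is the rank argument, since \(A\) now depends on \(Du\) and so \(T_A\) is built from first jets of \(u\) at \(x\) and \(y\). For fixed \(u\in\U\), the map \(\Phi_u(x,y,p):=T_A(u+p,x,y)\) on \(\Conf_2(X^n)\times P\) is \(C^1\). Indeed, \(u+p\) is \(C^2\), so \(z\mapsto(z,(u+p)(z),D(u+p)(z))\) is \(C^1\) jointly in \((z,p)\), and \(\gamma\) is \(C^1\). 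Because \(\dim\Conf_2(X^n)=2nd<2nd+1\), Corollary~\ref{cor:parametric-avoidance} will then give \(u+p\in\mathcal G_A\) for almost every \(p\in P\). Here \(\mathcal G_A\) is the set of \(u\) for which \(T_A(u,\cdot,\cdot)\) has no zero. This holds provided \(D_p\Phi_u\) is surjective at every point.

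The key step, and the one I expect to be the main obstacle, is to show that \(D_p\Phi_u(x,y,p)\) is surjective. The difficulty is that \(A\) now sees the FOC entries \(L_{ii}\), so perturbing the last coordinate could disturb the FOC blocks. The plan is to use Lemma~\ref{lem:hermite} to choose \(h\in P\) with \(j_y^1h=0\) and with \(j_x^1h=(\eta,M)\), where \(M_{ii}=0\) for every \(i\). Write \(\bar u=u+p\). Then the differential of the FOC blocks at \(x\) in direction \(h\) is \((M_{ii})_i=0\), and at \(y\) it is also \(0\). The differential of the last coordinate is
\[
D_v\gamma\bigl(x,\bar u(x),D\bar u(x)\bigr)\eta+\sum_{i\neq j}D_{L_{ij}}\gamma\bigl(x,\bar u(x),D\bar u(x)\bigr)M_{ij}.
\]
By hypothesis, the covector \((D_v\gamma,(D_{L_{ij}}\gamma)_{i\neq j})\) is nonzero. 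So some choice of \((\eta,(M_{ij})_{i\neq j})\) along a direction where it is nonzero gives any prescribed scalar. The first-jet data at \(x\) with \(L_{ii}=0\) is unconstrained, so Lemma~\ref{lem:hermite} supplies such an \(h\).

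To hit the FOC blocks we use further perturbations. We choose \(h'\) with \(h'(x)=h'(y)=0\), with off-diagonal derivative blocks \(0\), and with prescribed own-derivative blocks \((D_ih'_i(x))_i\) and \((D_ih'_i(y))_i\). These move the FOC blocks arbitrarily, but they also change the last coordinate through \(D_{L_{ii}}\gamma\). The resulting linear map in block form is upper triangular: the FOC rows are hit by the \(h'\) directions, and the last row is hit by the \(h\) directions, which do not affect the FOC rows. Hence it is surjective. This is the point where the nonvanishing hypothesis is needed in exactly the payoff-level and cross-player directions, and not in the own-derivative directions.

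The topological part is where the paper's restriction to the full space \(\U\) with its \(C^2\) topology matters, because \(A\) involves \(Du\). Density of \(\mathcal G_A\) follows from the existence of arbitrarily small good perturbations \(p\in P\). For the \(G_\delta\) property, take a countable compact exhaustion \(K_m\) of \(\Conf_2(X^n)\). The condition that \(T_A(u,\cdot,\cdot)\) does not vanish on \(K_m\) is open in \(u\). The reason is that \(u\mapsto T_A(u,\cdot,\cdot)|_{K_m}\) is continuous into \(C^0(K_m)\): \(C^2\)-closeness gives uniform closeness of values and first derivatives on the compact projections, and \(\gamma\) is uniformly continuous on compact sets. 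Thus \(\mathcal G_A\) is a dense \(G_\delta\), hence Borel and generic, and the almost-everywhere statement makes it the witness for finite prevalence with probe \(P\). Finally, \eqref{eq:NE-implies-FOC} shows that distinct \(A\)-tied equilibria would be zeros of \(T_A\), so no game in \(\mathcal G_A\) has them.
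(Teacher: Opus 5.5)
Your proposal is correct and follows the paper's proof. It uses the same tie map \(T_A\), the same rank argument, and the same compact-exhaustion argument for the dense \(G_\delta\) witness. For the rank argument, you prescribe the own-derivative blocks at \(x\) and \(y\) via Lemma~\ref{lem:hermite} to hit the FOC rows, then use payoff values and cross derivatives at \(x\), with \(j_y^1h=0\) and vanishing diagonal blocks, to supply the remaining change in the last coordinate; you simply make this explicit as a block-triangular decomposition.
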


The condition requires a first-order response to payoff levels or cross
effects while holding the own-strategy derivatives fixed. For example, it
holds for \(A(u,z)=D_j u_i(z)\) when \(d=1\) and \(i\neq j\).
The typical set may depend on the fixed criterion \(A\).

\begin{proof}
Define
\[
T_A(u,x,y):=\bigl(F(u,x),F(u,y),A(u,x)-A(u,y)\bigr).
\]
For each fixed \(u\), the map \((x,y,p)\mapsto T_A(u+p,x,y)\) is \(C^1\).
Lemma~\ref{lem:hermite} lets a perturbation prescribe the diagonal
derivative blocks at \(x\) and \(y\), hence both FOC blocks. The nonflatness
condition lets its payoff values and cross derivatives at \(x\) supply any
remaining change to the last coordinate of the parameter differential,
without altering either FOC block or the jet at \(y\). The parameter
derivative is therefore surjective, and
Corollary~\ref{cor:parametric-avoidance} gives absence of zeros for almost
every \(p\in P\), for every base game \(u\).

As in the proof of Theorem~\ref{thm:no-welfare-ties}, nonvanishing on
compact configuration sets is open in the \(C^2\) topology. Compact
exhaustion and arbitrarily small good perturbations make the no-zero set
a dense \(G_\delta\) and a Borel witness for finite prevalence. Every game
in this set has no distinct \(A\)-tied Nash equilibria.
\end{proof}

\bibliographystyle{plain}
\bibliography{references}

\end{document}